\documentclass[11pt]{article}

\usepackage[colorlinks,linkcolor=blue,citecolor=blue,urlcolor=blue]{hyperref}
\usepackage{amsmath,amsthm}

\usepackage{fullpage}
\usepackage{url} \urlstyle{same}
\newcommand{\cites}{\cite}

%%%%%%%%%%%%%%%%%%%%%%%%%%%%%%%%%%%%%%%%%%%%%%%%%%%%%%%%%%%%%%%%%%%%%%%%%%%%%%%
% Definitions

\newtheorem{theorem}{Theorem}

\newtheorem{lemma}[theorem]{Lemma}

\newcommand{\cN}{{\mathcal{N}}}
\newcommand{\cP}{{\mathcal{P}}}

\newcommand{\cR}{{\mathcal{R}}}

\newcommand{\tr}{\mathop{\mathrm{tr}}}
\newcommand{\rank}{\mathop{\mathrm{rank}}}

\def\l{\left}
\def\r{\right}
\renewcommand{\>}{\rangle}
\newcommand{\<}{\langle}
\def\ra{\rightarrow}
\def\la{\leftarrow}
\newcommand{\ket}[1]{|#1\rangle}

\newcommand{\proj}[1]{\left|#1\right\>\!\left\<#1\right|}

\newcommand{\ot}{\otimes}
\newcommand{\eps}{\epsilon}

\def\geqclean{\stackrel{\!{\scriptstyle clean}}{\geq}\!}

\newcommand{\be}{\begin{equation}}
\newcommand{\ee}{\end{equation}}
\def\ba#1\ea{\begin{align}#1\end{align}}
\def\bit#1\eit{\begin{itemize}#1\end{itemize}}

\newcommand{\eq}[1]{(\ref{eq:#1})}
\newcommand{\thmref}[1]{Theorem~\ref{thm:#1}}

\newcommand{\lemref}[1]{Lemma~\ref{lem:#1}}
\newcommand{\secref}[1]{Section~\ref{sec:#1}}

\begin{document}

\title{Entanglement spread and clean resource inequalities}
%in quantum communication} 

\author{Aram W. Harrow\\
Department of Mathematics, 
University of Bristol,
Bristol, BS8 1TW, U.K.\\
harrow@gmail.com}

\maketitle

\begin{abstract}
  This article will examine states that superpose different amounts of
  entanglement and protocols that run in superposition but generate or
  consume different amounts of entanglement.  In both cases we find a
  uniquely quantum difficulty: entanglement cannot be conditionally
  discarded without either using communication or causing decoherence.

  I will first describe the problem of entanglement spread in states
  and operations, as well as some methods of dealing with it.  Then
  I'll describe three applications to problems that at first glance
  appear to be quite different: first, a reinterpretation of the old
  observation that creating $n$ partially entangled states from
  singlets requires $\theta(\sqrt{n})$ communication, but cannot
  itself be used to communicate; second, a new lower bound technique
  for communication complexity; third, an explanation of how to extend
  the quantum reverse Shannon theorem from tensor power sources to
  general sources.
\end{abstract}

\section{Introduction}
This paper will challenge the idea that, when it comes to
entanglement, more is always better.  While some resources in quantum
information theory, like use of communication channels, can be safely
discarded, entanglement cannot be kept in one branch of a
superposition and discarded in another without causing decoherence.
First, I will outline two specific challenges that this fact poses to
the traditional resource model of quantum information.

\subsection{Motivation: Coherent conditional execution of quantum
  communication protocols.}  

A productive way of understanding quantum communication protocols can
be to view quantum states and operations as {\em resources} and
protocols that convert one resource into another as {\em resource
  inequalities} (RIs)\cite{Bennett04,DHW05}.  So if $[q\ra q]$ represents a noiseless qubit
channel from Alice to Bob, $[qq]$ is an EPR pair and $[c\ra c]$ is a
noiseless classical bit channel (cbit) from Alice to Bob, then
teleportation can be expressed as $2[c\ra c]+[qq]\geq [q\ra q]$.  This
picture was formalized in \cite{DHW05}, which also proved many basic
intuitive facts about RIs.  For example, if $\alpha\geq\beta$ and
$\beta\geq\gamma$ for some resources $\alpha,\beta,\gamma$ then
$\alpha\geq \gamma$ as well.

However, there is another basic way of combining protocols
which works trivially in the classical case but fails in the quantum
case.  Suppose Alice and Bob each know a common bit $b$ and want to
perform protocol $\cN_b$ conditioned on the value of $b$.  This
happens often when communication protocols are embedded in larger
applications where the amount and type of communication is itself
input-dependent.  Let us call this conditional operation $b?\cN_1:\cN_0$,
following C notation.  Classically it is trivial to show that if
$\alpha\geq \<\cN_b\>$ for $b=0,1$ then $\alpha \geq
\<b?\cN_1:\cN_0\>$: Alice and Bob use $\alpha$ to perform either
$\cN_0$ or $\cN_1$ depending on their shared value of $b$.  However,
in a quantum protocol this might leak information about $b$ to the
environment, because the environment may be able to distinguish
$\cN_0$ and $\cN_1$ on some inputs.  In \secref{clean-RI} I will 
describe one possible solution to this problem, which will be called 
{\em clean resource inequalities}.  The main idea will be to discard
only (up to an asymptotically vanishing error) qubits in standard
states, such as $\ket{0}$, so that the environment cannot learn which
protocol is being run.

\subsection{Motivation: Non-asymptotic analysis of entanglement}

A second problem with the traditional resource framework arises in
quantifying pure state entanglement. The entanglement of a pure state
$\ket{\psi}^{AB}$ is usually said to be characterized by its {\em
  entropy of entanglement}, $E(\psi)$, which is defined as
$E(\psi) := S(\psi^A) = S(\psi^B)$.  Here $\psi:=\proj{\psi}$ is
the density matrix corresponding to $\psi$, $\psi^A := \tr_B \psi$ is
Alice's reduced density matrix (and similarly for $\psi^B$),
$S(\rho):=-\tr\rho\log\rho$ is the von Neumann entropy, and the base
of logs and exponentials will always be 2.  Asymptotically, the entropy
of entanglement characterizes the entanglement present in a state in
the following sense: given $\ket{\psi}^{\ot n}$ with
$E:=E(\ket{\psi})$, {\em entanglement concentration}\cite{BBPS96} can
produce $nE-o(n)$ maximally entangled states (i.e. $\ket{\Phi}^{\ot
  nE-o(n)}$, where
$\ket{\Phi}^{AB}:=\frac{1}{\sqrt{2}}(\ket{00}+\ket{11})$) with $o(1)$
error, while {\em entanglement dilution}\cite{BBPS96,LP99} can map
$\ket{\Phi}^{\ot nE+o(n)}$ to $\ket{\psi}^{\ot n}$ with $o(1)$ error
and $o(n)$ bits of classical communication.  Thus, to leading order in
the number of copies, tensor powers of entangled pure states can be
described by a single parameter: the entropy of entanglement.

In general, however, entangled pure states cannot be
fully described by the entropy of entanglement, even given free local
operations.  
For example, instead of von Neumann entropy, we can use 
R\`{e}nyi entropies.  For any $\alpha> 0$, define $E_\alpha(\psi) :=
S_\alpha(\psi^A) = \frac{1}{1-\alpha}\log\tr(\psi^A)^{\alpha}$, and
extend by continuity to $\alpha=0,1,\infty$.  In particular,
$E_0(\psi)$ is $\log\rank\psi^A$ and $E_\infty(\psi)=-\log
\|\psi^A\|_\infty$, where $\rank\psi^A$ is the number of non-zero
eigenvalues of $\psi_A$ and $\|\psi^A\|$ is its largest eigenvalue.
($S_0$ and $S_\infty$ are also called the max-entropy and min-entropy,
respectively.) 
Like $E(\psi)$, the $E_\alpha(\psi)$ are also
invariant under local unitaries, and non-increasing on average under
local operations and classical communicaton (LOCC).  
Thus they define
restrictions on entanglement transforms that may be more stringent
than those obtained from the non-increase of entropy of entanglement.
When communication is restricted then R\`{e}nyi entropies can further
limit possible entanglement transformations.

Following
\cite{HW02}, define the {\em entanglement spread} of a state
$\ket{\psi}^{AB}$ to be
\be\Delta(\psi) := \log \rank\psi^A + \log \|\psi^A\|_\infty.
\label{eq:spread-def}\ee
(Also in \cite{HW02} was the more general
$\Delta^{\alpha,\beta}(\psi) = H_\alpha(\psi^A) - H_\beta(\psi^A)$,
where $H_p(\rho) := \frac{1}{1-p}\log\tr\rho^p$ and $\alpha<\beta$.
We recover \eq{spread-def} by setting $\alpha=0$ and $\beta=\infty$.)
The entanglement spread is never negative and $\Delta(\psi)=0$ if and
only if $\psi$ has all non-zero Schmidt
coefficients equal, meaning it is a product state or a maximally
entangled state.  If $\Delta(\psi)>0$ then we say that
$\ket{\psi}$ is {\em partially entangled} (since it is neither
unentangled nor maximally entangled).  Since $\Delta(\psi_1\ot
\psi_2) = \Delta(\psi_1) + \Delta(\psi_2)$, it follows that if $\Delta(\psi)>0$,
then $\Delta(\psi^{\ot n})=\Theta(n)$.   However, this value is not
very robust: for any constant $\eps>0$, we can perturb
$\ket{\psi}^{\ot n}$ by $\eps$ and reduce
its spread to $\Theta(\sqrt{n})$.
To capture this insight we will also use the $\eps$-perturbed
entanglement spread 
$\Delta_\eps(\psi)$, which is defined for any $\eps\geq 0$ to be
(following \cite{HW02})
$$%\Delta_\eps(\psi) := 
\min\left\{ 
\log \tr P + \log \| P\psi^A P\|_\infty
: \tr P\psi^A \geq 1-\eps \right\},$$
where the minimization is over projectors $P$.  Note that
$\Delta_0(\psi)=\Delta(\psi)$.  Now
$\Delta_\eps({\psi}^{\ot n}) = \Theta(\sqrt{n})$ for any $\eps>0$, but
of course there are 
still states on $2n$ qubits, such as the even superposition between
$\ket{\Phi}^{\ot 
  n}$ and a product state, which have $\eps$-perturbed spread nearly
equal to $n$.

The main application of entanglement spread is using the following result
from \cite{HW02} to produce lower bounds on communication:
\begin{theorem}[Corollary 10 of \cite{HW02}]\label{thm:spread}
If $\ket{\phi}^{AB}$ is transformed using local operations and $C$
bits of classical communication (in either direction) into a state
that has fidelity $1-\eps$ with $\ket{\psi}$ then
$$C \geq \Delta_\delta(\psi) - \Delta_0(\phi) + 2\log(1-\delta),$$
where $\delta = (4\eps)^{1/8}$.
\end{theorem}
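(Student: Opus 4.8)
The plan is to track how the entanglement spread of the (pure) bipartite state changes along the protocol, isolating the contributions of local operations from those of communication, and only at the end to pay for the approximation by passing from the exact spread $\Delta_0$ to the $\delta$-smoothed spread $\Delta_\delta$.

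First I would set up a coherent picture. Model the protocol as local isometries applied by Alice and Bob, interleaved with rounds in which a register crosses the $A|B$ cut; $c$ transmitted qubits (or dephased qubits, i.e.\ cbits) correspond to a register of dimension $2^c$, and the count $C$ is the total dimension-logarithm sent in either direction. The two elementary facts I would establish are: (i) a local isometry by either party conjugates (or leaves untouched) Alice's marginal, so the Schmidt spectrum of $\ket\phi^{AB}$, and hence $\log\rank\phi^A$, $\log\norm{\phi^A}_\infty$ and every $\Delta_\delta(\phi)$, is exactly invariant; and (ii) tracing a $d$-dimensional register $Q$ out of Alice's marginal (equivalently, adjoining one) changes $\Delta$ by at most $\log d$. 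Fact (ii) is the crux computation: writing $\tr_Q\sigma=\sum_{i=1}^d\langle i|\sigma|i\rangle$ as a sum of $d$ compressions gives $\rank\tr_Q\sigma\le d\,\rank\sigma$ and $\norm{\tr_Q\sigma}_\infty\le d\,\norm{\sigma}_\infty$, while the support inclusion $\mathrm{supp}\,\sigma\subseteq(\mathrm{supp}\,\tr_Q\sigma)\ot Q$ supplies the reverse rank bound and a short argument controls the peak, so that $\Delta$ moves by at most $\log d$ per register. Chaining (i) and (ii) over the whole protocol yields the exact coherent backbone $\Delta_0(\psi)\le\Delta_0(\phi)+C$ when the target is reached exactly.

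Second I would upgrade this backbone to the stated approximate, classical inequality, where two things must be paid for. Classical communication is not coherent — the transmitted bits are copied into the environment, so Alice's genuine marginal is a partially dephased version of the coherent one — and the protocol only reaches a state $\rho_{AB}$ with $F(\rho,\psi)\ge 1-\eps$ rather than $\ket\psi$ exactly. Both effects are absorbed by the smoothing projector $P$ in $\Delta_\delta$. Concretely, I would take the $P$ optimal for the output state, transport it to $\psi^A$ using $F(\rho^A,\psi^A)\ge F(\rho,\psi)\ge1-\eps$, and control the loss in captured weight $\tr P\psi^A$ and in the peak $\norm{P\psi^A P}_\infty$ by Fuchs--van de Graaf (trace distance $\le\sqrt{2\eps}$-type) together with a gentle-measurement estimate. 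The repeated passage fidelity $\to$ trace distance $\to$ projector weight $\to$ peak is what compounds several square roots into $\delta=(4\eps)^{1/8}$, and the additive $2\log(1-\delta)$ is the book-keeping constant enforcing $\tr P\psi^A\ge1-\delta$. Since I only need the smoothed (smaller) spread of the target and may use the unsmoothed (larger) spread of the source, all slack is in the safe direction.

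The step I expect to be the main obstacle is showing that the decoherence forced by classical communication does not secretly inflate the \emph{smoothed} spread. In the coherent model local operations are literally spectrum-preserving, but a genuine partial trace to the environment can raise the \emph{raw} spread arbitrarily — purifying a spread-out $\rho$ gives a product state with $\Delta_0=0$ whose Alice-marginal has large $\Delta_0$ — which is exactly the non-monotonicity of $\Delta_0$ that motivates $\Delta_\delta$. The real work is to use that the output has fidelity $1-\eps$ with a \emph{pure} state to argue the environment is nearly decoupled at the end, so that the only surviving inflation of $\Delta_\delta$ is the $C$ bona-fide register crossings plus an $O(\eps)$-controlled smoothing error. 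Making this decoupling-versus-smoothing trade-off quantitative, and checking that fact (ii) survives the optimization over $P$ rather than holding only at fixed rank, is where the argument's weight lies.
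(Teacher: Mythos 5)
The paper does not actually prove Theorem~\ref{thm:spread}; it imports it as Corollary~10 of \cite{HW02}, and the proof there conditions on the classical transcript rather than purifying the protocol as you do. Your attempt breaks precisely at what you call the crux, fact~(ii). The two bounds you derive there, $\rank(\tr_Q\sigma)\le d\,\rank(\sigma)$ and $\norm{\tr_Q\sigma}_\infty\le d\,\norm{\sigma}_\infty$, add up to an increase of $2\log d$ in $\Delta$, not $\log d$, so what your chaining actually yields is $\Delta_0(\psi)\le\Delta_0(\phi)+2C$, a factor-two-weaker theorem. Worse, the $\log d$ claim is not just unproven but false for a general (coherently transmitted) register: let Alice and Bob share $n$ EPR pairs and a shared flag qubit in superposition, and have Alice, controlled on the flag, either send her $n$ halves to Bob (destroying $n$ ebits) or create $n$ fresh EPR pairs and send those halves (creating $n$ ebits); the final Alice marginal is $\frac12\proj{0}\ot\proj{0\cdots0}+\frac12\proj{1}\ot I/4^n$, with spread $\approx 2n-1$, so $n$ transmitted qubits raise the spread by nearly $2$ per qubit. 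This is exactly why Table~1 of the paper gives $[q\ra q]$ the spread interval $[-1,1]$ of width two, and why the proof of Theorem~\ref{thm:sim-LB} says each transmitted \emph{qubit} can increase spread by at most $2$; the constant $1$ per transmitted \emph{cbit} in Theorem~\ref{thm:spread} is special to classical messages and cannot come out of a generic partial-trace estimate.

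The missing idea is to use the classicality (dephased, block-diagonal, environment-copied structure) of the message register. If $Q$ is classical, Alice's marginal has the form $\sigma^{A'Q}=\sum_i q_i\,\sigma_i^{A'}\ot\proj{i}^Q$; when she sends $Q$, $\rank\l(\sum_i q_i\sigma_i\r)\le\sum_i\rank(\sigma_i)=\rank(\sigma^{A'Q})$, so the rank does not grow at all while the peak grows by at most a factor $d$; symmetrically, when she receives a classical register the peak cannot grow (since $q_i\sigma_i\le\sigma^{A'}$) and only the rank can, by at most a factor $d$. Either way the increase is $\le\log d$, which is the bound you need. Even with this repaired, your coherent picture still has to face the fact that with classical communication the intermediate Alice--Bob states are mixed (entangled with the transcript copies in the environment), so the quantity being chained is no longer a Schmidt spectrum; your proposed end-of-protocol ``decoupling'' argument does not retroactively fix the per-step accounting, which is where the constant in front of $C$ is decided. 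The proof of \cite{HW02} sidesteps both problems at once: condition on the transcript $m$, so the state stays pure throughout; under any local measurement outcome of probability $p$ the Schmidt rank cannot increase and the largest Schmidt coefficient grows by at most $1/p$, giving $\Delta_0(\psi_m)\le\Delta_0(\phi)+\log(1/p_m)$; since there are at most $2^C$ transcripts, a Markov argument over the fidelity condition produces a single transcript that is both high-fidelity and has $p_m\ge(1-o(1))2^{-C}$. Your final smoothing step (fidelity to trace distance to gentle measurement, compounding into $\delta=(4\eps)^{1/8}$ and the $2\log(1-\delta)$ bookkeeping) is in the right spirit and is essentially where that proof also ends, but it only closes the argument once the per-cbit constant and the conditional-purity issue are handled as above.
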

In particular, if we begin with maximally entangled states, then
preparing $\ket{\psi}$ to within a fidelity of $1-\eps$ requires
$\Delta_\delta(\psi) + 2\log(1-\delta)$ bits of communication.
This restriction holds even with an unlimited supply of EPR pairs, and
so problematizes the idea that maximally entangled states are a
good canonical form for the resource of pure state entanglement.  In
\secref{ent-cap} we will discuss alternative ways to quantify
entanglement as a resource.

\section{Dealing with entanglement spread}
In this section I propose two solutions to the above problems.  To
build quantum protocols that can be run in superposition, I will demand
that they discard only (approximately) standard states to the
environment, as I will describe in \secref{clean-RI}.  Then in
\secref{ent-cap}, I will propose measuring not just the maximum
amount of entanglement that can be created by a protocol, but instead
finding the range of entanglement that it can cleanly
generate/consume.  These approaches are not rigid rules, but rather
illustrate principles that can be adapted to diverse situations, as we
will see in \secref{applications}.
 
\subsection{Approach: clean resource inequalities}
\label{sec:clean-RI}

In \cite{DHW05}, a resource inequality $\alpha\geq\beta$ meant that
resource $\alpha$ could be approximately transformed using local
operations into $\beta$.
Allowing free local operations is standard practice in quantum
information theory, but discarding to the environment can be dangerous
when running different protocols in superposition.  Following and
extending \cite{HS05}, I will say that a {\em clean resource
  inequality} exists, and is denoted $\alpha \geqclean \beta$, when
$\alpha$ can be mapped to $\beta$ using only (up to error $\eps_n$
that goes to 0 as $n\ra\infty$) \bit
\item Local unitaries.
\item Adding ancillas initialized in the $\ket{0}$ states.
\item Discarding ancillas in the $\ket{0}$ state.
\item Discarding messages that have been sent through classical channels.
\item A dynamic resource $(\cN:\omega)$ (meaning, as defined in
  \cite{DHW05}, an operation $\cN$ constrained to act on average input
  $\omega$) may be used as a consumed resource only in a way that
  sends a constant state to the environment.  More formally, suppose
  we are given $(\cN^{A_1A_2A_3\ra BE}: \omega^{A_1}$.  Then we need to
  replace this resource with $(\cN^{A_1A_2A_3\ra B}:
  \tilde{\omega}^{A_1A_2})$ such that
  $\tilde{\omega}^{A_1}=\omega^{A_1}$ and $\cN(\sigma)^E$ is the same
  for all $\sigma$ satisfying
  $\sigma^{A_1A_2}=\tilde{\omega}^{A_1A_2}$.
\eit

This last point says that noisy resources can be used only when they
leak information to the environment that is independent of the inputs
or the particular protocol being run.  For example, if we are given
$[c \ra c]$ as part of the input resource $\alpha$ then we can use it
only if we promise to input the same distribution of 0 and 1
regardless of which protocol we're using.  On the other hand, $[q\ra
q]$ can be used with any input since it doesn't leak anything to the
environment.

The primary application of clean protocols is the
following general principle.
\begin{lemma}[protocol superposition principle]\label{lem:superposition}
Suppose that Alice and Bob would like to execute $m$ different
operations, $\cP_1,\ldots, \cP_m$ in superposition.  This means that
they would like to perform an operation $\cP$ that 
satisfies
$$\cP \sum_{k=1}^m c_k \ket{k}_A\ket{k}_B\ket{\psi}_{AB} \approx_{\eps_n}
\sum_{k=1}^m c_k \ket{k}_A\ket{k}_B
\cP_k \ket{\psi}_{AB}$$
for any coefficients $\{c_k\}$, and where $\eps_n\ra 0$ as
$n\ra\infty$. 

Let $\cR_k$ denote the set of resources capable of simulating $\cP_k$
{\em cleanly}:
$\cR_k := \l\{\alpha : \alpha \geqclean \<\cP_k\> \r\}. $
Then
$$\cR := \l\{\alpha : \alpha \geqclean \<\cP\>\r\} = \bigcap_k
\cR_k.$$
\end{lemma}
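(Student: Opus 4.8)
The plan is to prove the set equality $\cR = \bigcap_k \cR_k$ by establishing the two inclusions separately, where the forward inclusion $\cR \subseteq \bigcap_k \cR_k$ is the ``soundness'' direction and the reverse inclusion $\bigcap_k \cR_k \subseteq \cR$ is the ``completeness'' direction.

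Let me think about both directions carefully.

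**Forward direction** ($\cR \subseteq \bigcap_k \cR_k$): We need to show that if $\alpha$ can cleanly simulate the superposed operation $\cP$, then it can cleanly simulate each individual $\cP_k$. The idea: if you have a clean simulation of $\cP$, you can get a clean simulation of any particular $\cP_k$ by simply initializing the control registers to $\ket{k}_A\ket{k}_B$ (a specific basis state, so $c_k=1$ and all others zero). Since this is a product input, the superposition principle equation reduces to $\cP(\ket{k}_A\ket{k}_B\ket{\psi}) \approx \ket{k}_A\ket{k}_B \cP_k\ket{\psi}$. I need to verify that (a) preparing the control register in state $\ket{k}$ is a free operation (it's just adding an ancilla in $\ket{0}$ then applying a local unitary), and (b) the resulting protocol is still clean, i.e. it still only discards standard states to the environment. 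This should be routine since restricting a clean protocol to a specific input preserves cleanliness.

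**Reverse direction** ($\bigcap_k \cR_k \subseteq \cR$): This is the substantive direction and where the main obstacle lies. Here we assume $\alpha \in \cR_k$ for every $k$, meaning $\alpha \geqclean \<\cP_k\>$ for each $k$, and we must build a single resource transformation showing $\alpha \geqclean \<\cP\>$. The natural approach is to run all $m$ clean protocols "in superposition," controlled on the shared index register. The crucial point — and this is exactly the motivation the paper set up in the introduction — is that running different protocols in superposition normally leaks which-branch information to the environment, causing decoherence. The cleanness hypothesis is precisely what saves us: since each $\cP_k$ discards only standard states $\ket{0}$ to the environment, all the branches discard the *same* thing to the environment, so the environment learns nothing about the control index $k$ and no decoherence occurs. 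The same argument applies to the use of dynamic/noisy resources: the last bullet of the clean-RI definition guarantees each consumed noisy resource sends a constant state to the environment independent of which protocol branch is running.

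**The construction and its obstacle.** Concretely, I would write each clean protocol for $\cP_k$ as a sequence of the allowed clean operations (local unitaries, adding/discarding $\ket{0}$ ancillas, discarding classical messages, clean use of dynamic resources). To run them coherently, I would replace each local unitary $U_k$ by a controlled unitary $\sum_k \proj{k} \ot U_k$ acting on the control plus work registers, and arrange that all branches use ancilla registers of a common size (padding with $\ket{0}$'s as needed). The main technical obstacle is \textbf{register alignment and the accumulation of errors}: different protocols $\cP_k$ may use different numbers of ancillas, different amounts of communication, and consume the resource $\alpha$ in structurally different ways, so running them in superposition requires embedding them all into a common register layout without the control information leaking through register sizes. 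I would handle this by padding every branch up to the maximum resource usage over all $k$ (the work registers, ancilla counts, and number of channel uses), discarding the padding as clean $\ket{0}$ states; since $m$ is a fixed constant, the $m$ individual errors $\eps_n$ combine by the triangle inequality into a single error that is at most $m$ times the worst individual error, which still vanishes as $n\ra\infty$. The heart of the verification is showing that the composite protocol's interaction with the environment is independent of $k$: this follows branch-by-branch from the cleanness of each $\cP_k$, so the controlled execution leaves the control register coherent, yielding exactly the superposition relation defining $\cP$ and hence $\alpha \geqclean \<\cP\>$.
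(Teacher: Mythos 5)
Your proposal is correct and follows essentially the same route as the paper's proof: the forward inclusion $\cR\subseteq\bigcap_k\cR_k$ comes from $\cP\geqclean\<\cP_k\>$ (fix the control registers to $\ket{k}$), and the reverse inclusion comes from running the $m$ clean protocols conditioned on the shared index, with cleanness guaranteeing that nothing discarded to the environment (including via dynamic resources) reveals $k$ and breaks the superposition. Your extra discussion of register padding and error accumulation just makes explicit technical details that the paper's four-sentence proof leaves implicit.
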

One direction of the proof is easy: $\cR\subseteq \cap_k \cR_k$, since 
$\cP\geqclean \cP_k$ for each $k$.  To prove $\cR \supseteq \cap_k
\cR_k$, we start with $\alpha\in\cap_k\cR_k$ and clean protocols for
$\alpha\geqclean \cR_k$ for each $k$.  Then have Alice and Bob run
each protocol conditioned on their (shared) value of $k$.  Since each
resource inequality is clean, nothing is discarded that would break
superpositions over different values of $k$.

{\em Remark:} Note that clean resource inequalities are not the same
as reversible RIs (meaning asymptotic equivalences; $\alpha\geq \beta$
and $\beta\geq \alpha$).  For example, while clean RIs cannot freely
discard entanglement, they may discard communication resources, or use
communication to reduce entanglement, neither of which are reversible.
On the other hand, most resource equalities can be made clean.  The
only possible complication arises for protocols that use an unlimited
amount of entanglement: i.e. $2[c\ra c] + \infty [qq] = [q\ra q] +
\infty[qq]$ can be made clean only with some additional effort.

\subsection{Approach: entanglement capacity as an interval}
\label{sec:ent-cap}

Armed with the definition of clean RIs, we now examine the entangling
capacities of various quantum operations.  The most obvious
restriction is that entanglement cannot be cleanly discarded, so that
e.g. while $2[qq]\geq [qq]$, it does not hold that $2[qq]\geqclean
[qq]$.  Moreover, eliminating entanglement cleanly is now a
non-trivial resource.  So
$$[c\ra c : I/2] \geqclean -[qq]$$
 via a
protocol where Alice sends her half of a shared state $\ket{\Phi}$
through the 
classical channel and Bob performs a CNOT with the bit he receives as
control and with his half of $\ket{\Phi}$ as target.  Then he is left
with a $\ket{0}$, which he discards.  At the same time, $[c\ra
c:I/2]\geqclean \emptyset$ (where $\emptyset$ is the null resource),
since Alice can always send a random bit through a channel.  This
means that both protocols can be run in superposition and classical
communication can be used to generate superpositions of different
amounts of entanglement.  To express this concisely, we can say that
the {\em entanglement capacity range} of $[c\ra c]$ (alternately, its
{\em spread capacity}) contains $[-1,0]$. In fact, the spread capacity
of $[c\ra c]$ is exactly $[-1,0]$ as can be seen from the non-increase
of entanglement under LOCC for the upper bound and \thmref{spread} for
the lower bound.

We can apply this approach to other resources as well.  Instead of
measuring the maximum entanglement that can be sent using a resource,
we will find the range of entanglent that it can cleanly
generate/consume.  Table~\ref{tab:spread} lists the entanglement
spread capacities of some common communication resources.
\begin{table}[h]
\caption{Various entanglement spread capacities.}\label{tab:spread} \begin{center}
  \begin{tabular}{|ll|c|c|}
\hline
resource & (abbr.) & min & max \\\hline
qubit & $[q\ra q]$ or $[q\la q]$ &  -1 & 1\\ \hline
cbit & $[c\ra c]$ or $[c\la c]$ & -1 & 0\\ \hline
cobit\cite{Har03} & $[q\!\ra\!qq]$ or $[qq\!\la\! q]$ & 0 & 1\\ \hline
co-cobit\cite{HS05} & $[q\!\la\!qq]$ or $[qq\!\ra\!q]$ & -1 & 0\\ \hline
ebit &  $[qq]$ & 1 & 1 \\ \hline
partially ent. states & $\ket{\psi}^{\otimes n}$ 
& $nE - O(\sigma \sqrt{n})$ & $nE + O(\sigma \sqrt{n})$ \\ \hline
embezzler\cite{HV03} & $\ket{\varphi_n}$ & $-n\epsilon$ & $n\epsilon$ \\\hline
unitary gate & $\<U\>$ & $-E(U^\dag)$ & $E(U)$ \\ \hline
\end{tabular}
\end{center}
\end{table}
Most of the bounds are straightforward to prove using \thmref{spread}
and the LOCC-monotonicity of entropy of entanglement, but a few lines
require explanation.  Cobits and co-cobits are defined in \cite{Har03}
and \cite{HS05} respectively, and their capacities can be proven by
reversibly mapping them to combinations of $[q\ra q]$ and $[qq]$.
For the partially entangled state $\ket{\psi}_{AB}$ with reduced
density matrices $\psi^A$ and $\psi^B$, we define $E = S(A)_\psi =
-\tr \psi^A \log \psi^A$ and $\sigma^2 = \tr \psi^A (\log \psi^A)^2 -
E^2$.
The $n$-qubit embezzling state\cite{HV03} $\ket{\varphi_n}$ is defined to be
(up to normalization) $\sum_{i=1}^{2^n}
\frac{1}{\sqrt{i}}\ket{i}^A\ket{i}^B$ and can be used catalytically
(unlike other resources, which are consumed) to create or destroy
$n\eps$ ebits (or indeed any state of Schmidt rank $\leq 2^{n\eps}$)
by incurring error $\eps$.  Finally, for a bipartite unitary gate $U$,
$E(U)$ is its entanglement capacity\cite{BHLS02}: $E(U) := \max\{e :
\<U\> \geq e [qq]\}$

By time-sharing, the spread capacity of a resource $\alpha$ is
completely characterized by an upper and lower bound (although there
can always be tradeoffs between entanglement and other resources).
And by \lemref{superposition} clean protocols using any amount of
entanglement within that range can be performed in superposition using
$\alpha$.

\section{Applications}\label{sec:applications}
This section will describe three applications of entanglement spread
and clean resource inequalities to some apparently unrelated problems.

\subsection{Entanglement dilution}\label{sec:dilution}
Let $\ket{\psi}^{AB}$ be a partially entangled state.  Then, as
mentioned above, $\Theta(\sqrt{n})$ cbits in either direction are
necessary\cite{HL02, HW02} and sufficient\cite{LP99} to prepare
$\ket{\psi}^{\ot n}$ from EPR pairs.  Unlike many other lower bounds
on communication, this bound holds for a task which itself has no
communication capacity: that is, that ability to create
$\ket{\psi}^{AB}$ from singlets has no value for
communication.\footnote{Being able to selectively either create
  $\ket{\psi}^{AB}$, or some other state, such as an all zeroes state,
  {\em would} have communication capacity if the decision of which
  state to create were influenced by one party.  However, this is
  different from the ability to create   $\ket{\psi}^{AB}$ from EPR
  pairs once both parties have agreed to perform this task.}

The framework of entanglement spread explains this strange situation
by the fact that $\Omega(\sqrt{n})$ spread needs to be created in
order to (approximately) prepare $\ket{\psi}^{AB}$.  Thus the apparent
requirement for communication is something of a red herring: for
example, using a $O(\sqrt{n}/\eps)$-qubit embezzling state, together
with $nE(\psi)$ EPR pairs, would be enough to prepare
$\ket{\psi}^{AB}$ up to error $\approx\!\eps$.  Alternatively, if we
used some quantum operation, such as a bipartite unitary $U$, to
prepare $\ket{\psi}^{\ot n}$ from $nE(\psi)$ EPR pairs, then the
number of uses of $U$ needed would be related to the entanglement
spread capacity of $U$ (discussed in \secref{cc-lb}) rather than by
its communication capacity.  Thus, the communication cost of
entanglement dilution can more fruitfully be understood as a spread
cost.

\subsection{Using entanglement capacity as a lower bound for
  communication complexity}\label{sec:cc-lb}

A common method of lower bounding the communication complexity of a
distributed function, or almost equivalently, the cost to simulate a
bipartite unitary gate, has been to use its capacity to
communicate\cites{CDNT98, MW07}. In fact, entanglement capacity, or
even better, entanglement spread capacity, is a lower bound that is
always at least as strong (since $E(U)$ is at least as large as the
communication capacity of $U$~\cite{BHLS02}).  The main idea is the
following lower bound for simulating a unitary gate:
\begin{theorem}\label{thm:sim-LB}
If $U$ is a bipartite unitary gate such that
$$\infty [qq] + Q_1 [q\ra q] + Q_2[q\la q] \geq \<U \>,$$
then $2(Q_1+Q_2)\geq E(U) + E(U^\dag)$.
\end{theorem}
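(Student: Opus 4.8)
\emph{Proof strategy.} The plan is to recast the claim as a statement about entanglement spread and then read off the bound from \thmref{spread}. The right-hand resource $\<U\>$ has spread capacity $[-E(U^\dag),E(U)]$, an interval of width $E(U)+E(U^\dag)$, whereas an ebit contributes spread-width $0$ and each qubit of communication contributes spread-width $2$ (its spread capacity being $[-1,1]$). So morally the spread achievable on the left, $2(Q_1+Q_2)$, must dominate $E(U)+E(U^\dag)$. To make this rigorous I would exhibit a single input on which any simulation of $\<U\>^{\ot n}$ is forced to create spread $\approx n(E(U)+E(U^\dag))$, and then extract the communication lower bound from \thmref{spread}, taking $n\ra\infty$ to match the asymptotic definition of $E(U)$.

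First I would build the input. Introduce a control bit held in the same state by both parties, and split into two branches. In branch $\ket0_A\ket0_B$, feed $U^{\ot n}$ a product state, which by the definition of $E(U)$ it carries to $\approx nE(U)$ EPR pairs, and adjoin $nE(U^\dag)$ spectator EPR pairs that $U^{\ot n}$ does not touch. In branch $\ket1_A\ket1_B$, feed $U^{\ot n}$ exactly $nE(U^\dag)$ EPR pairs, which it carries to a product state --- this is the creation protocol for $U^\dag$ run backwards --- and adjoin no spectators. Both branches then carry exactly $nE(U^\dag)$ ebits at the input, so the input is a balanced superposition of two states of equal Schmidt rank and has $\Delta_0=0$. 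After $U^{\ot n}$, branch $0$ holds $\approx n(E(U)+E(U^\dag))$ ebits and branch $1$ holds $\approx 0$, so the output reduced density matrix is a half-half mixture of a near-maximally-mixed state of rank $2^{n(E(U)+E(U^\dag))}$ and a near-pure state. Any projector $P$ retaining weight $\geq 1-\delta$ with $\delta<1/2$ must keep most of branch $0$, forcing $\log\tr P\approx n(E(U)+E(U^\dag))$, and cannot discard branch $1$, forcing $\log\|P\psi^A P\|_\infty\geq -1$; hence $\Delta_\delta(\text{out})\geq n(E(U)+E(U^\dag))-O(1)$ for every $\delta<1/2$.

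Next I would count communication and apply \thmref{spread}. Since $\infty[qq]$ is available, I replace each qubit channel by teleportation, converting $nQ_1[q\ra q]+nQ_2[q\la q]$ into $C=2n(Q_1+Q_2)$ cbits (in both directions) while consuming only free entanglement; preparing the superposed input from EPR pairs is purely local and keeps the spread at $0$. Thus a spread-$0$ state is mapped, using $C=2n(Q_1+Q_2)$ cbits, to a state of perturbed spread $\geq n(E(U)+E(U^\dag))-O(1)$, and \thmref{spread} gives $2n(Q_1+Q_2)\geq n(E(U)+E(U^\dag))-O(1)$. Dividing by $n$ and letting $n\ra\infty$ (with the simulation error, hence $\delta$, tending to $0$) yields $2(Q_1+Q_2)\geq E(U)+E(U^\dag)$.

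The main obstacle I anticipate is the middle step: arranging a single input so that one branch of $U^{\ot n}$ realizes creation at rate $E(U)$ and the other realizes destruction at rate $E(U^\dag)$ \emph{while} the two branches are balanced to give vanishing input spread, and then controlling the $\delta$-perturbed spread of the mixed, approximate output against the $o(n)$ slack inherent in the asymptotic definitions of $E(U)$ and $E(U^\dag)$. The factor $2$ (from teleporting each qubit into two cbits) and the fact that free entanglement carries zero spread width are precisely what make the bound read $2(Q_1+Q_2)$ rather than anything else.
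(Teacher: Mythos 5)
Your proposal is correct and is essentially the paper's own argument: both proofs apply \thmref{spread} to a superposition in which $U^{\ot n}$ creates entanglement at rate $E(U)$ in one branch and destroys it at rate $E(U^\dag)$ in the other, yielding output spread $\approx n(E(U)+E(U^\dag))$ from an input of spread zero, with each transmitted qubit accounting for at most $2$ units of spread. Your version merely fills in details the paper leaves implicit (balancing the input branches with spectator EPR pairs rather than a common stock of $m$ ebits, the explicit teleportation conversion of qubits into cbits, and the projector calculation bounding $\Delta_\delta$ of the output).
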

\begin{proof}
The proof is a simple application of \thmref{spread}.  Starting with $m$
EPR pairs, we can use $U$ $n$ times to create a superposition of
roughly $m+nE(U)$ EPR pairs and $m-nE(U^\dag)$ EPR pairs (assuming
that $m$ is large enough), for a total spread of roughly
$n(E(U)+E(U^\dag))$.  On the other hand, each transmitted qubit can
increase spread by at most 2.\end{proof}

As a corollary, we obtain a new lower bound for the communication
complexity of a function $f(x,y)$ where Alice holds $x$ and Bob holds
$y$, even with Alice and Bob can use an unlimited number of EPR pairs.
Following \cite{MW07} we can turn any quantum protocol for 
computing $f(x,y)$ into a unitary gate that approximates
$U_f:=\sum_{x,y} (-1)^{f(x,y)} \proj{x}\ot \proj{y}.$ If the protocol
for $f$ is exact then this requires only running the protocol once
forward and once backwards (reversing all communication and inverting
all local unitaries).  In the bounded-error case, we could (following
\cite{MW07}) repeat the protocol for $f$ $O(\log n)$ times to reduce
the error to $1/n^2$, so that (from Lemma 1 of \cite{HS05}), the
resulting transform has capacities within $O(1/n)$ of those of $U_f$.
Thus $E(U_f)$ is a lower bound on the communication complexity of
computing $f$ exactly and $\Omega(E(U_f)/\log n)$ is a lower bound on
its bounded-error complexity.  In both cases the lower bounds hold
even when the protocols for computing $f$ can use an unlimited number
of EPR pairs.

These bounds resemble the results of \cite{MW07}, where the
entanglement-assisted communication capacity of $U_f$ was used to
lower bound the complexity of $f$; however, they are always at least
as powerful.  This is because of the identity $E(U)+E(U^\dag)\geq
C_+^E(U)$, where $C_+^E(U) = \max\{C_1+C_2 : \<U\> + \infty [qq] \geq
C_1[c\!\ra\!c] + C_2[c\!\la\!c]\}$ is the entanglement-assisted
communication capacity of $U$ \cite{BS03a, Har05}.  Moreover, there
exist\cite{HL08} unitary gates where $E(U)$ can be exponentially
larger than $C_+^E(U)$ and so the spread technique gives
correspondingly better bounds, although no distributed functions
corresponding to such unitaries are known.

There is one weakness to this lower bound technique, which
also appears in the related R\`{e}nyi-entropy-based lower bounds of
\cite{DH02}.  While the lower bounds apply even when unlimited EPR
pairs are allowed, they no longer hold when protocols for $f$ are
assisted by embezzling states or other entangled states with nonzero
spread.  However, in future work, I will prove that  communication
capacities differ by at most a constant factor between the
EPR-assisted and the embezzling-assisted cases. 

\subsection{The Quantum Reverse Shannon Theorem on general inputs}
\label{sec:qrst}

If Shannon's noisy channel coding theorem is thought of as using a
noisy channel to simulate a noiseless channel,
then the reverse Shannon theorem uses a noiseless channel to simulate
a noisy channel.  When free shared randomness is allowed, this
simulation can be performed using an asymptotic rate of communication
equal to the capacity of the channel\cite{BSST01, Winter:02a}.

Similarly, the idea of the quantum reverse Shannon theorem (or QRST;
originally proposed in \cite{BSST01}) is to simulate a quantum channel
using shared entanglement and a rate of communication equal to its
entanglement-assisted capacity.  So far this simulation has only been
shown possible when the inputs are tensor powers\cite{Devetak05a,
  ADHW06, BDHSW-qrst} and in a few other special cases.  However, in
\cite{BDHSW-qrst}, we will show that for general inputs a quantum channel
cannot be simulated using unlimited EPR pairs and a rate of
communication equal to its capacity.  In this section, I'll give a
sketch of why this is true.  The main problem is that simulating
channels generally requires creating linear amounts of spread, for
which an an additional resource is necessary, such as extra
communication (which could be from Bob to Alice), an embezzling
state or some other non-standard entangled state.

We can even use \lemref{superposition} to obtain the optimal rates for
the case of general sources.  To see this, we first review the rates
for the classical reverse Shannon theorem for general sources with
feedback (meaning that Alice learns Bob's output).  This simulation
task requires $C := \max_p I(A;B)_p$ cbits and $R := \max_p H(B)_p -
C$ rbits (bits of shared randomness)\cite{BSST01, Winter:02a}.  Here
$p$ is some input distribution, $I(A:B)_p$ is the mutual information
between random variables $A$ and $B$ where $A$ is distributed
according to $p$ and $B$ 
is given by passing $A$ through a noisy channel, and $H(B)_p$ is the
entropy of $B$ in the same setting.  Note that the rbit cost may be
smaller than $\max_p H(B)-I(A;B) = \max_p H(B|A)_p$ since for some
input distributions $I(A;B)_p$ may not be maximal, and thus some of
the classical communication can be used to substitute for randomness.
Thus, in considering the worst input distributions, we should look at
their cbit cost and their total cbit+rbit cost and maximize each
separately.  If we give up the feedback requirement then there is now
a somewhat more complicated cbit/rbit tradeoff curve, but the feasible
resource region for general sources is still simply the intersection
of the feasible resource regions of each possible i.i.d. source.

For the quantum reverse Shannon theorem this is no longer true, due
to the cost of creating entanglement spread.  For simplicity we will
discuss the QRST 
with feedback, meaning that, instead of working with a noisy channel
$\cN$, we use its isometric extension $U_\cN$ and give to Alice the
part of the output which normally would go to the environment.  When
we restrict $U_\cN$ to a source with average density matrix $\rho$,
then the optimal simulation protocol corresponds to the RI
$$ I(A;B)_\rho [c\ra c] + S(B)_\rho [qq] \geq
\<U_\cN : \rho\>.$$
Here the quantities $I(A;B)_\rho$ and $S(B)_\rho$ refer to a state in which
$\rho^{A'}$ is purified into a state $\ket{\Phi}^{AA'}$, and then $A'$
is sent through $U_\cN$ to obtain outputs for $B$ and $E$.  However,
for general sources---or even for locally distinguishable mixtures of
tensor power states---the communication cost is not simply $\max_\rho
I(A;B)_\rho$, even when unlimited EPR pairs are allowed.  The problem
is that the $H(B)_\rho$ entanglement cost varies from source to
source, and naively running each fixed-source protocol in
superposition will consume varying amounts of entanglement, leading to
decoherence between different branches of the superposition.

In order to run different fixed-source channel simulations in
superposition, we then need to use some source of entanglement
spread, such as extra communication or an embezzling state, to obtain
superpositions of different amounts of entanglement.  For concreteness,
suppose that we want
to simulate $\<U_\cN:\rho\>$ using $C_1 [c\ra c] + C_2 [c\la c] + E
[qq]$.  Then this is possible if and only if
\begin{align*}
C_1 &\geq \max_\rho I(A;B)_\sigma \qquad
E \geq \max_\rho H(B)_\sigma\\
C_2 &\geq E - \min_\rho \l[H(B)_\sigma + \min(0,
I(A;B)_\sigma-C_1)\r]
\end{align*}
Similar tradeoffs can be derived for any combination of input
resources. The proof that these resources are both necessary and
sufficient will follow from \lemref{superposition}; we need only the
additional fact that any simulation protocol can be converted into a
clean simulation protocol, modified only by the possible addition of
some pure entanglement in the output.

One additional tool will be
needed for completely general inputs, as opposed to mixtures of tensor
powers: by analogy to the classification of bit strings by their types
(i.e. frequencies of different symbols), the full QRST will decompose
the input in the Schur basis, which splits it into different
irreducible representations of the unitary and symmetric groups.  Full
details of the construction and its proof of optimality will be in
\cite{BDHSW-qrst}.

\section{Conclusion}
The problem of entanglement spread shows that even bipartite pure state
entanglement can retain some surprises.
%While it is worth considering mixed states and multiparty states as
%well, it may be too early for a few definitions to capture all the
%quantities of operational significance.
Moreover we have seen in the QRST that the idea of spread
can yield precise and non-trivial statements even when dealing with
mixed states and noisy channels.

One major unresolved question about spread relates to its
interconvertibility.  While some resources, such as classical
communication and embezzling states, are more or less universal
sources of spread, it is still possible that spread exists in many
incomparable forms.  For example, consider the problem of transforming
$\ket{\psi_1}^{\ot m}$ into $\ket{\psi_2}^{\ot n}$ for some integers
$m,n$ and partially entangled states $\ket{\psi_1}, \ket{\psi_2}$.  If
Alice and Bob can communicate classically (even $O(\sqrt{n})$ bits)
then from \cite{BBPS96} we know that it suffices to use
$m=nE(\psi_2)/E(\psi_1) + o(n)$ copies of $\ket{\psi_1}$, and it is
necessary to use at least $m=nE(\psi_2)/E(\psi_1) -o(n)$ copies.
But is this amount of communication always necessary?
However, the current formulations of entanglement spread give only
partial answers to this question: in some cases $\Omega(\sqrt{n})$
classical communication is necessary\cite{FL05}, but it is unknown if
this holds generically, and how it depends on $m$.

A more challenging problem is to
determine the extent to which shared EPR pairs or shared embezzling states can reduce the quantum communication complexity for a
distributed function.   Many examples are known in which access to EPR pairs can dramatically reduce the {\em classical} communication complexity\cite{Gavinsky09}, but much less is known about the question of quantum communication complexity.  The largest known separation between the quantum communication complexity with and without shared EPR pairs is a mere factor of two, in the case of the equality function~\cite{Winter-message-ID}.  On the other hand, the only known upper bound on the usefulness of shared entanglement is that the classical communication complexity without shared entanglement is at most exponentially greater than the (quantum or classical) communication complexity with shared entanglement.

A related question, originally posed in \cite{DH02}, is to determine whether shared embezzling states can ever be more useful in reducing communication complexity than EPR pairs can.  In a future paper I will prove that communication complexity with shared EPR pairs is at most a constant factor higher than with embezzling states (or indeed any shared entangled state).

{\em Acknowledgments:} I want to thank Charlie Bennett, Patrick
Hayden, Debbie Leung, Peter Shor and Andreas Winter for many useful
conversations on this subject.  My funding is from the Army Research
Office under grant W9111NF-05-1-0294, the European Commission under
Marie Curie grants ASTQIT (FP6-022194) and QAP (IST-2005-15848), and
the U.K. Engineering and Physical Science Research Council through
``QIP IRC.''

\bibliographystyle{IEEEtran.bst}
\bibliography{../../latex/bib}
\end{document}